\documentclass[11pt]{article}
\usepackage[T1]{fontenc}
\usepackage{lmodern}
\usepackage[margin=1in]{geometry}
\usepackage{amsmath,amssymb,amsthm,mathtools}
\usepackage{booktabs,microtype,needspace}
\usepackage[numbers,sort&compress]{natbib}
\usepackage{xurl}
\usepackage[colorlinks=true,linkcolor=blue!50!black,citecolor=blue!50!black,urlcolor=blue!50!black]{hyperref}
\usepackage{xcolor}
\hypersetup{pdftitle={A faster matrix multiplication algorithm through structured optimization},pdfauthor={Reza Zadeh},pdfsubject={Faster asymptotic matrix multiplication through conditioned optimization and exact certification},pdfkeywords={matrix multiplication algorithm, exponent, laser method, combination loss, rational certificate}}
\newtheorem{theorem}{Theorem}
\newtheorem{lemma}{Lemma}
\newtheorem{proposition}{Proposition}
\theoremstyle{remark}

\title{A faster matrix multiplication algorithm\\through structured optimization}
\author{Reza Zadeh\\Matroid, Inc.\\\texttt{reza@matroid.com}}
\date{October 6, 2026}

\begin{document}
\maketitle
\begin{abstract}
We present a new asymptotic matrix multiplication construction yielding
$\omega<2.37115924$, and therefore an algorithm that multiplies two
$n\times n$ matrices in $O(n^{2.37115924})$ arithmetic operations over
the rationals, reals, or complex numbers. The construction improves the
bound $2.371177$ of Dupont et al. within the combination-loss framework
for the Coppersmith--Winograd tensor. Our approach combines a shared
recursive parameter family with a conditioned optimization procedure
that addresses two obstacles: weak gradients at low-mass constituents
and competing bottlenecks in the retained exponent. A mass- and
probability-dependent rescaling supports quasi-Newton optimization in
$212\,820$ coordinates, while a six-control correction balances the
three extraction roles at each relevant stage. Positive rational
entropy-dual factors and an independent ordinary-string verifier turn
the resulting construction into an exact certificate. The verified
upper-bound expression lies in
$[2.3711592385931075066123,\,2.3711592385931075066124]$.
The complete witness and an offline verifier using only the Python
standard library accompany the paper.
\end{abstract}

\section{Introduction}

We give a new asymptotic algorithm for matrix multiplication, reducing
the exponent bound to $2.37115924$. The improvement arises from a
specific recursive tensor construction discovered by an optimization
procedure tailored to the geometry and bottlenecks of combination-loss
analysis. The resulting rational construction is accompanied by an
independently implemented exact verifier.

The matrix multiplication exponent $\omega_{\mathbb Q}$ is the infimum
of the real numbers $a$ such that two $n\times n$ matrices over
$\mathbb Q$ can be multiplied using $O(n^{a+\varepsilon})$ arithmetic
operations for every $\varepsilon>0$. We write $\omega=\omega_{\mathbb Q}$
throughout. Since Strassen's subcubic algorithm~\cite{strassen1969},
successive improvements have relied on algebraic constructions that
reduce many matrix products to tensor decompositions. The
Coppersmith--Winograd tensor~\cite{coppersmith1990}, together with
the asymptotic sum inequality~\cite{schonhage1981}, provides the
starting point for the construction studied here.

Combination-loss analysis refines the laser method by retaining
additional combinations of tensor constituents. Its recent developments
include asymmetric hashing~\cite{duan2023}, refined bounds for square
and rectangular multiplication~\cite{williams2024}, and the asymmetric
framework of Alman et al.~\cite{alman2025}. These theorems reduce an
upper-bound calculation to a constrained optimization problem involving
probability distributions, entropy penalties, and recursively derived
matrix dimensions. A feasible parameter choice is sufficient; global
optimality is unnecessary.

Dupont et al.~\cite{dupont2026} formulate this optimization at recursion
level four and obtain $\omega<2.371177$. The central question addressed
here is how to obtain a better algorithmic construction from that
framework. We combine parameter sharing with mass-dependent
conditioning and explicit correction of the extraction bottlenecks.
The resulting construction proves the following algorithmic guarantee.

\Needspace{9\baselineskip}
\begin{theorem}
\label{thm:main}
There is an algorithm for multiplying two $n\times n$ matrices over
$\mathbb Q$ using $O(n^{2.37115924})$ arithmetic operations.
More precisely, the square matrix multiplication exponent satisfies
\[
\omega_{\mathbb Q}<2.37115924.
\]
The same algorithmic guarantee holds over every characteristic-zero extension
of $\mathbb Q$, including $\mathbb R$ and $\mathbb C$.
\end{theorem}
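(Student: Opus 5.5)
The plan is to reduce Theorem~\ref{thm:main} to the combination-loss framework of Dupont et al.~\cite{dupont2026}, itself built on the asymmetric analysis of Alman et al.~\cite{alman2025} and the asymptotic sum inequality~\cite{schonhage1981}. That framework states: if a level-four parameter assignment for the Coppersmith--Winograd tensor $\mathrm{CW}_q$ satisfies all its linear and entropy constraints, then $\omega$ is at most an explicit upper-bound expression $V$. Here $V$ is built from the chosen probability distributions, the entropy penalties, the combination-loss terms, and the recursively derived matrix dimensions. Since any feasible point gives a valid bound, it suffices to exhibit one feasible point and certify that its value satisfies $V<2.37115924$. We do not need to argue optimality.

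The steps, in order, are as follows.
\begin{enumerate}
\item Fix $q$ and the recursion depth at four, as in~\cite{dupont2026}.
\item Specify the witness: the shared recursive family of distributions on constituents at each level, given as positive rationals (the $212\,820$ coordinates). Record the marginal and compatibility constraints they must satisfy, including the three extraction roles at each stage.
\item Verify feasibility exactly in rational arithmetic. Marginals must sum correctly and all entries must be nonnegative. Each retained exponent's combination-loss inequality must hold, namely that for each of the three roles the penalty does not exceed the available entropy.
\item Bound $V$. Entropies involve logarithms, so exact evaluation is impossible. Instead, the positive rational entropy-dual factors serve as certificates: through the Gibbs/log-sum inequality, each entropy term becomes a lower or upper bound by linear expressions in logarithms of rationals. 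Those logarithms are then enclosed in rational intervals using series with explicit remainder bounds.
\item Propagate these interval bounds through the monotone recursive formula for $V$. This yields the enclosure $[2.3711592385931075066123,\,2.3711592385931075066124]$, whose upper endpoint is below $2.37115924$.
\end{enumerate}

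The algorithmic statement then follows from the definition of $\omega$ as an infimum. Because $\omega<2.37115924$ strictly, we can choose $a$ with $\omega<a<2.37115924$, and the definition gives an $O(n^{a+\varepsilon})$ algorithm for every $\varepsilon>0$. Taking $\varepsilon$ small enough that $a+\varepsilon\le 2.37115924$ yields $O(n^{2.37115924})$. The resulting algorithm uses only rational constants from the bilinear constructions.

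For characteristic-zero extensions $K\supseteq\mathbb Q$, a bilinear algorithm over $\mathbb Q$ is also valid over $K$. Its coefficients lie in $\mathbb Q\subseteq K$, and the identities it relies on are polynomial identities that hold in every commutative ring containing $\mathbb Q$. Hence $\omega_K\le\omega_{\mathbb Q}$, which covers $\mathbb R$ and $\mathbb C$.

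The main obstacle is Step~4 together with the feasibility checks: carrying out a certification of this size that is rigorous rather than numerical. Each inequality in the framework has to be checked with correct rounding direction. Nearly tight constraints, such as the balanced bottleneck roles that the six-control correction equalizes, leave very little slack, so the log enclosures must be tight to many digits. I would first build slack into the witness by moving it slightly inside the feasible region, accepting a tiny loss in $V$. I would then verify every inequality with exact rationals plus interval logarithms, and cross-check the result with an independent string-level verifier.
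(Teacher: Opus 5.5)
Your proposal is correct and takes essentially the paper's route: apply the combination-loss theorem with the asymptotic sum inequality to an exactly checked rational witness, and use the positive entropy-dual factors (via nonnegativity of relative entropy) to upper-bound the maximum-entropy penalties, so that the conservative $\widehat E$ is at most the published retained exponent; then enclose the logarithms in rational intervals to get a strictly positive margin, and pass to an algorithm by choosing $\varepsilon$ and extending scalars. Two small corrections: role balance is not a feasibility constraint needing built-in slack, since the paper evaluates the three minima directly with interval arithmetic; and because the dual bound is one-sided, the quoted interval encloses the conservative $\widehat\Omega$, which only upper-bounds the published expression rather than enclosing it.
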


Relative to the bound of Dupont et al., the reduction in the exponent is
$1.776\cdot10^{-5}$. This is an improvement in the polynomial growth
rate of the arithmetic complexity: the corresponding leading powers
of $n$ differ by a factor $n^{0.00001776}$. As with other algorithms
obtained through this tensor framework, the guarantee is asymptotic;
the exponent alone does not specify a practical crossover size.
A contemporaneous public computational
release~\cite{publiccertificate2026} reports an upper-bound expression
approximately equal to $2.371169352063661$; the present certified
expression is also smaller than that value.

Table~\ref{tab:historical-exponents} places the result among selected
major improvements in the matrix multiplication exponent.

\begin{table}[t]
\centering
\small
\renewcommand{\arraystretch}{1.12}
\begin{tabular}{@{}llr@{}}
\toprule
Year & Work and reference & Bound on $\omega$\\
\midrule
1969 & Strassen~\cite{strassen1969} & $\leq\log_2 7<2.807355$\\
1978 & Pan~\cite{pan1978} & $<2.796$\\
1979 & Bini, Capovani, Romani, and Lotti~\cite{bini1979} & $<2.7799$\\
1981 & Sch\"onhage~\cite{schonhage1981} & $<2.522$\\
1982 & Romani~\cite{romani1982} & $<2.517$\\
1982 & Coppersmith and Winograd~\cite{coppersmith1982} & $<2.495548$\\
1986 & Strassen~\cite{strassen1986} & $<2.479$\\
1990 & Coppersmith and Winograd~\cite{coppersmith1990} & $<2.375477$\\
2010 & Stothers~\cite{stothers2010} & $<2.37369$\\
2012 & Vassilevska Williams~\cite{williams2012} & $<2.3729$\\
2014 & Le Gall~\cite{legall2014} & $<2.3728639$\\
2021 & Alman and Vassilevska Williams~\cite{alman2021} & $<2.3728596$\\
2023 & Duan, Wu, and Zhou~\cite{duan2023} & $<2.371866$\\
2024 & Vassilevska Williams et al.~\cite{williams2024} & $\leq2.371552$\\
2025 & Alman et al.~\cite{alman2025} & $<2.371339$\\
2026 & Dupont et al.~\cite{dupont2026} & $<2.371177$\\
2026 & BSD (public certificate)~\cite{publiccertificate2026} & $<2.371169353$\\
\midrule
2026 & \textbf{This paper} (Theorem~\ref{thm:main}) & $<\mathbf{2.37115924}$\\
\bottomrule
\end{tabular}
\caption{Selected improvements in upper bounds on the matrix
multiplication exponent. Years refer to the cited publications;
some results appeared earlier in proceedings or preprints. The
Vassilevska Williams entry uses a bound from the corrected full
version. The BSD row is a public computational certificate, rather
than a paper, and its bound is rounded upward from the exact
rational certificate.}
\label{tab:historical-exponents}
\end{table}

\paragraph{Contributions.}
The paper contributes a new certified matrix multiplication
construction and a concrete combination of techniques for finding and
verifying it:
\begin{enumerate}
\item \textbf{A faster asymptotic algorithm from a compact construction.}
The rational witness specifies an instance of the level-four
combination-loss construction with $q=5$. Shared descendant parameters
and binary-tree position symmetries give a $212\,820$-coordinate family
in which the improved exponent is attained. Unfolding the shared graph
preserves admissibility, and six global coordinate permutations produce
equal final matrix dimensions.
\item \textbf{Conditioning adapted to constituent masses.}
We rescale each numerical coordinate using its occurrence mass and
local probability, including marginal-variance weights for entropy-dual
coordinates. This addresses the disparate gradient scales of the
recursive construction. The rescaling is refreshed between limited-memory
quasi-Newton solves, with a larger scale range for low-mass constituents.
\item \textbf{A six-control correction for extraction bottlenecks.}
The root and two constituent stages each contribute a minimum of
three role totals. We couple an augmented-Lagrangian continuation to
a separate correction using two controls per stage. A $6\times6$
Newton system targets the six role differences, and candidate selection
uses the original minimum-based objective. This focuses the correction
on the bottlenecks that directly determine the exponent.
\item \textbf{An exact, independently checkable construction.}
The export preserves positive entropy-dual factors as arbitrary-size
rational numbers. An independent verifier expands ordinary ternary
strings and assembles all physical-coordinate contributions before
applying the published minima. The resulting finite certificate can
be checked offline, separately from the optimization implementation.
\end{enumerate}

These contributions build on the tensor theorem of Alman et al.
and the optimization formulation of Dupont et al. The orbit
representation and equality-constrained approach were informed by
computational notes shared by Eric Naslund. Softmax parameterization,
quasi-Newton optimization, augmented Lagrangians, entropy duality, and
rational interval arithmetic are established tools; the methodological
contribution is their specified adaptation and integration for the
construction certified here.

Section~\ref{sec:construction} specifies the construction and the
restricted parameter family. Section~\ref{sec:optimization} describes
the numerical continuation and rational export.
Section~\ref{sec:verification} gives the certificate and the proof of
Theorem~\ref{thm:main}. The appendix supplies implementation-level
lemmas for shared parameters and logarithm enclosures.

\section{Construction and parameter family}
\label{sec:construction}

The construction is an instance of the combination-loss framework of
\cite{alman2025}, in the formulation of \cite[Section~2]{dupont2026}.
The new algorithm is specified by a rational recursive extraction
construction within that framework. Its parameters determine the
retained matrix products and their asymptotic dimensions.  We use the Coppersmith--Winograd tensor with parameter $q=5$ and
recursion level $\ell^*=4$.  Thus the starting tensor is
$\mathrm{CW}_5^{\otimes 8}$, whose asymptotic rank cost, in binary logarithms,
is
\begin{equation}
 C=2^{\ell^*-1}\log_2(q+2)=8\log_2 7.
 \label{eq:rank-cost}
\end{equation}
Below we specify the restricted family used in the numerical search.  All
restrictions are parameter identifications or symmetries allowed by the
published construction.

\subsection{Shapes, complete split distributions, and recursive parameters}

A level-$\ell$ shape is a triple $s=(i,j,k)$ of nonnegative integers with
$i+j+k=2^\ell$.  For each coordinate $w\in\{X,Y,Z\}$, its complete split
distribution $\beta_s^w$ is supported on
\[
 \mathcal S_{\ell,s_w}
 =\left\{L\in\{0,1,2\}^{2^{\ell-1}}:
                  \sum_t L_t=s_w\right\}.
\]
Different occurrences of the same shape may have different parameters;
we suppress the occurrence index when it is unambiguous.  The root
distribution $\alpha^*$ is supported on all $153$ shapes of level~$4$.

For a positive shape $s$ at level $\ell\geq 3$, define its allowed
half-shapes by
\[
 D_s=\{u\in\mathbb Z_{\geq0}^3:
             u_X+u_Y+u_Z=2^{\ell-1},\ u_w\leq s_w\text{ for all }w\}.
\]
There are six extraction regions, indexed by permutations
$p_r:\{X,Y,Z\}\to\{X,Y,Z\}$.  At an occurrence $v$ of shape $s$, the
parameters comprise a probability vector $(A_{v,r})_{r=1}^6$, a distribution
$\alpha_{v,r}$ on $D_s$ for each region, and references $c(v,r,u)$ to
children of shape $u$.  The regional complete split distributions are
\begin{equation}
 B_{v,r}^w
 =\sum_{u\in D_s}\alpha_{v,r}(u)
       \bigl(\beta_{c(v,r,u)}^w
                   \otimes\beta_{c(v,r,s-u)}^w\bigr),
 \qquad
 \beta_v^w=\sum_{r=1}^6 A_{v,r}B_{v,r}^w,
 \label{eq:beta-recursion}
\end{equation}
where the tensor product denotes ordered concatenation of the two split
strings.  In the family used for the final witness,
\begin{equation}
 \alpha_{v,r}(u)=\alpha_{v,r}(s-u).
 \label{eq:complement-symmetry}
\end{equation}
The root distribution $\alpha^*$ has no corresponding constraint under
permutation of its three coordinates.

For a shape with a zero coordinate, one nonzero coordinate's complete
split distribution is free.  The other is its image under
$L\mapsto\mathbf 2-L$, and the zero coordinate is concentrated on the
all-zero string.  A positive shape at level~$2$ is a permutation of
$(1,1,2)$.  It has a parameter $\mu\in[0,1/2]$: each coordinate of value
$1$ has probabilities $(1/2,1/2)$ on $(01,10)$, and the coordinate of
value $2$ has probabilities $(\mu,1-2\mu,\mu)$ on $(02,11,20)$.

The search imposes two further identifications.  First, complete split
distributions are invariant under the group generated by exchanging the
two children of each node in a complete binary tree on the string
positions.  Only one probability per orbit is stored, with the orbit
multiplicity included in normalization and entropy evaluation.  Equation
\eqref{eq:complement-symmetry} makes the recursive product construction
compatible with exchanging the two halves.  Second, at the transition
from level~$4$ to level~$3$, children with the same parent occurrence and
the same half-shape are shared across the six regions.  At the transition
from level~$3$ to level~$2$, children are indexed separately by parent,
region, and half-shape.  These choices give $153$, $1\,785$, and
$41\,310$ stored occurrences at levels $4$, $3$, and $2$, respectively.
Unfolding the shared references produces an admissible recursive
construction; reuse of a parameter does not identify tensor variables
between distinct occurrences.

Including distribution logits, leaf parameters, region logits, and
entropy-dual parameters, the search has $212\,820$ real coordinates.
This is a coordinate count, not the dimension of the feasible set:
softmax gauge freedoms and the paired representation in
\eqref{eq:complement-symmetry} introduce redundancies.  The final witness
is expanded into ordinary ordered ternary strings, so its verification
does not rely on the orbit representation.

\subsection{Certified entropy penalties}
\label{sec:entropy-duals}

The maximum-entropy penalties can be bounded without certifying the
solution of an auxiliary optimization problem.  Let $\alpha$ be a
distribution on an allowed set of shapes $D$, and let
$H_D^{\max}(\alpha)$ be the maximum entropy of a distribution on $D$
having the same three marginals as $\alpha$.  Choose arbitrary strictly
positive factors $f_X,f_Y,f_Z$, and put
\begin{equation}
 t_u=\prod_{w\in\{X,Y,Z\}}f_w(u_w),\qquad
 U_D(\alpha;f)
   =\log_2\left(\sum_{u\in D}t_u\right)
                       -\sum_{u\in D}\alpha(u)\log_2 t_u.
 \label{eq:dual-upper-bound}
\end{equation}
Then $H_D^{\max}(\alpha)\leq U_D(\alpha;f)$.  Indeed, if $\rho$ has
the prescribed marginals, then
\[
 \sum_u\rho(u)\log_2t_u=\sum_u\alpha(u)\log_2t_u.
\]
Nonnegativity of the relative entropy of $\rho$ with respect to the
distribution $t/\sum_u t_u$ proves the inequality.  Therefore
\begin{equation}
 \widehat P_D(\alpha;f)=U_D(\alpha;f)-H(\alpha)
 \label{eq:certified-penalty}
\end{equation}
is a valid upper bound on the combination-loss penalty.  Its use can
only reduce the retained exponent supplied by the construction.  The
factors are optimized along with the other parameters, but their
optimality is not required for the certificate.

For completeness, the local quantities evaluated by the implementation
can be written as follows.  Extend the notation $H$ to a nonnegative
vector by $H(v)=-\sum_t v_t\log_2v_t$, with $0\log_2 0=0$.
For child weights $a_u$, distributions $\beta_u^w$, and a set $S$ of
exceptional child shapes, define
\begin{align}
 \eta_w(a,\beta;S)
 &=\sum_{u\in S}a_uH(\beta_u^w)
   +\sum_t\left[H(v_t)-H\left(\left[\sum_Lv_t(L)\right]\right)\right],
 \label{eq:eta-definition}\\
 v_t&=\sum_{\substack{u\notin S\\u_w=t}}a_u\beta_u^w.\nonumber
\end{align}
Here $H([b])=-b\log_2b$; hence each bracket in
\eqref{eq:eta-definition} is the mass of $v_t$ times the entropy of its
normalization.  At a positive occurrence $v$ in region $r$, write
$(x,y,z)=(p_r(X),p_r(Y),p_r(Z))$ and
$a_u=\alpha_{v,r}(u)+\alpha_{v,r}(s-u)$.  The three local retained
quantities are
\begin{align}
 c_{v,r,X}&=H((\alpha_{v,r})_x)
                       -\widehat P_{D_s}(\alpha_{v,r};f_{v,r}),\nonumber\\
 c_{v,r,Y}&=H(B_{v,r}^y)
                  -\eta_y(a,\beta;\{u:u_z=0\}),\label{eq:local-counts}\\
 c_{v,r,Z}&=H(B_{v,r}^z)
                  -\eta_z(a,\beta;\{u:u_x=0\text{ or }u_y=0\}).\nonumber
\end{align}
The subscripts $X,Y,Z$ on $c$ denote extraction roles, rather than
physical coordinates.

Let $m_v$ be the mass of an occurrence.  Root occurrences have mass
$m_v=\alpha^*(s_v)$, and masses are propagated by
\begin{equation}
 m_{c(v,r,u)}\mathrel{+}=
 m_v A_{v,r}\bigl(\alpha_{v,r}(u)+\alpha_{v,r}(s_v-u)\bigr).
 \label{eq:mass-recursion}
\end{equation}
Contributions from all references to a shared child are added.
The stage role totals are
\begin{equation}
 R_{\ell,k}=\sum_{v\text{ positive at level }\ell}
                    m_v\sum_{r=1}^6 A_{v,r}c_{v,r,k},
 \qquad \ell\in\{4,3\},\quad k\in\{X,Y,Z\}.
 \label{eq:stage-role-totals}
\end{equation}
At the canonical root, use $D=\{s:s_X+s_Y+s_Z=16\}$ in
\eqref{eq:certified-penalty}.  If
$\overline\beta^w=\sum_s\alpha^*(s)\beta_s^w$, the analogous root
totals are
\begin{align}
 R_{0,X}&=H(\alpha_X^*)-\widehat P_D(\alpha^*;f^*),\nonumber\\
 R_{0,Y}&=H(\overline\beta^Y)
              -\eta_Y(\alpha^*,\beta;\{s:s_Z=0\}),\label{eq:root-counts}\\
 R_{0,Z}&=H(\overline\beta^Z)
              -\eta_Z(\alpha^*,\beta;\{s:s_X=0\text{ or }s_Y=0\}).\nonumber
\end{align}

\subsection{Six-copy symmetrization and the exponent objective}

The complete construction consists of six globally coordinate-permuted
copies of the canonical parameter tree, each assigned weight $1/6$.
This symmetrization is performed in addition to the six local extraction
regions.  We write $\widehat E$ for the conservative retained exponent obtained
after inserting the entropy upper bounds
\eqref{eq:dual-upper-bound}; it need not equal the exponent obtained
with exact maximum-entropy penalties.  The six-copy symmetry yields
the following exact reduction of the stage formula for this $\widehat E$:
\begin{equation}
 \widehat E=\min_k R_{0,k}+\min_k R_{4,k}+\min_k R_{3,k}+E_2.
 \label{eq:retained-exponent}
\end{equation}
To see why the minima have this form, let $K_{r,w}$ be the canonical
contribution in physical coordinate $w$ from local region $r$, including
all occurrence masses and region weights.  For a fixed physical
orientation $h$ and a local permutation $p_r$, precisely one global
copy contributes, namely $g=h\circ p_r^{-1}$.  Its contribution in
physical coordinate $h(k)$ is $K_{r,p_r(k)}/6$.  Thus the three totals
at orientation $h$ are $R_{\ell,k}/6$, and summing the minimum over the
six physical orientations gives $\min_kR_{\ell,k}$.  This argument
uses the particular six-copy symmetry; it does not interchange a
minimum and a sum for arbitrary inputs.

The same symmetry makes the three terminal dimension exponents equal.
The contribution of a positive level-$2$ occurrence with parameter
$\mu_v$ is
\begin{equation}
 E_{2,v}=\frac{m_v}{3}
           \bigl(2+H(\mu_v,1-2\mu_v,\mu_v)\bigr),\qquad
 M_v=\frac{m_v}{3}(2-2\mu_v)\log_2q.
 \label{eq:positive-leaf}
\end{equation}
A zero-coordinate occurrence contributes no term to $E_2$ and contributes
\begin{equation}
 M_v=\frac{m_v}{3}
       \left(H(\beta_v^w)
                +\sum_L\beta_v^w(L)\,\#_1(L)\log_2q\right)
 \label{eq:zero-leaf}
\end{equation}
to $M$, where $w$ is any nonzero coordinate and $\#_1(L)$ counts the
entries equal to $1$.  Complementation shows that either nonzero
coordinate gives the same expression. If two coordinates vanish,
all distributions are point masses and the matrix-size contribution
is zero. Summing these terms gives
$E_2$ and the common matrix-size exponent $M>0$.

The degeneration theorem and asymptotic sum inequality imply
\begin{equation}
 \omega\leq\widehat\Omega(x):=\frac{C-\widehat E(x)}{M(x)}.
 \label{eq:exponent-objective}
\end{equation}
Equivalently, to prove $\omega<\tau$, it suffices to exhibit admissible
parameters for which
\begin{equation}
 \widehat E(x)+\tau M(x)-C>0.
 \label{eq:strict-margin}
\end{equation}
All three minima in \eqref{eq:retained-exponent} are retained in the
certified evaluation.

\subsection{From the construction to a multiplication algorithm}
\label{sec:algorithm}

The rational parameter assignment specifies the retained constituents,
their multiplicities, and their matrix dimensions in the published
tensor construction. For increasing tensor powers, the extraction
theorem yields independent matrix multiplication tensors with normalized
logarithmic multiplicity at least $\widehat E$, common normalized
logarithmic dimension $M$, and input rank cost at most $C$, up to
vanishing losses in the normalized exponents.
The six global coordinate permutations balance the final dimensions.
The asymptotic sum inequality then gives
$\omega\leq\widehat\Omega=(C-\widehat E)/M$.

For every $\varepsilon>0$, the standard conversion from these tensor
bounds to algorithms yields multiplication of two $n\times n$ matrices
in $O(n^{\widehat\Omega+\varepsilon})$ arithmetic operations.
The rational parameters are a concrete admissible input to the
degeneration and recursive blocking machinery of
\cite{alman2025,schonhage1981}; the entropy-dual factors certify the
analysis of the retained constituents.

Since the exact certificate proves $\widehat\Omega<2.37115924$, choose
$0<\varepsilon<2.37115924-\widehat\Omega$. This yields the
$O(n^{2.37115924})$ algorithm stated in Theorem~\ref{thm:main}.
The improvement is carried by the new recursive extraction parameters
and their certified analysis. The conversion from that construction
to a matrix multiplication algorithm uses the established tensor
framework.

\section{Structured optimization of the construction}
\label{sec:optimization}

The optimization procedure is organized around three structural features
of the construction: occurrence masses span different scales, the
retained exponent is governed by competing role minima, and the final
parameters must admit rigorous certification. We address these with
coordinate conditioning, a separate bottleneck correction, and positive
rational export. The orbit representation and equality-constrained
starting approach were informed by computational notes shared by Eric
Naslund; the following sections specify the implementation used here.
The documented procedure differs from the Adam and differentiated
Sinkhorn approach described in \cite[Section~3]{dupont2026} by combining
conditioned quasi-Newton solves with a separate six-control correction.

\subsection{Smooth parameterization and balanced search objective}

Probability vectors are represented by softmax logits.  For a local
split distribution, a raw logit $z_u$ is replaced by
$(z_u+z_{s-u})/2$ before applying the softmax, imposing
\eqref{eq:complement-symmetry}.  For an orbit of size $b$, adding
$\ln b$ before normalization converts a per-string logit into an
orbit mass.  Leaf parameters use $\mu=\tfrac12\operatorname{sigmoid}(z)$.
Entropy-dual factors are represented by unconstrained logarithms.
These transformations enforce the local probability constraints during
optimization, while the exported witness enforces them exactly.

Near the selected point, the three roles are nearly tied at the root
and at each constituent stage.  We exploit this structure in the
numerical search by defining, for $t\in\{0,4,3\}$,
\[
 \overline R_t=\frac13\sum_kR_{t,k},\qquad
 h_{t,Y}=R_{t,Y}-R_{t,X},\qquad
 h_{t,Z}=R_{t,Z}-R_{t,X}.
\]
The six quantities $h$ are equality residuals. The difference between
the average-based retained exponent and the actual one is exactly
\begin{equation}
\mathcal B(x)=\sum_{t\in\{0,4,3\}}
  \left(\overline R_t(x)-\min_k R_{t,k}(x)\right)\geq0.
\label{eq:bottleneck-gap}
\end{equation}
Thus an increase in the average can be lost to greater role imbalance.
The condition $h=0$ makes $\mathcal B=0$; it identifies the six
differences targeted by the correction below.  With target parameter
$\tau_0=2.37116$, the augmented-Lagrangian objective minimized during
the final continuation is
\begin{equation}
 \mathcal L(x;\lambda,\rho)
 =C-E_2(x)-\sum_{t\in\{0,4,3\}}\overline R_t(x)-\tau_0M(x)
       +\langle\lambda,h(x)\rangle
       +\frac{\rho}{2}\|h(x)\|_2^2.
 \label{eq:augmented-objective}
\end{equation}
The average-role expression is a search surrogate.  It agrees with
the retained exponent only when $h=0$; otherwise it can overestimate
the retained exponent.  No conclusion about $\omega$ is drawn from
this surrogate alone, and equality of the roles is not a hypothesis
of the certified construction.

After each outer pass, the multipliers are updated by
$\lambda\leftarrow\lambda+\rho h$.  The penalty starts at $\rho=20$
and is doubled after every fourth pass if
$\|h\|_\infty>2\cdot10^{-7}$.  The final continuation starts from a
previously obtained point with minimum-based objective approximately
$2.371161129895562$, taking multipliers from the preceding continuation
and resetting the penalty to $20$.  It consists of $20$ outer passes, each with an inner
iteration limit of $500$, and records $10\,087$ objective-and-gradient
evaluations.  The numerical computations use double precision,
automatic differentiation in JAX, and SciPy's L-BFGS-B implementation~\cite{byrd1995}, without explicit
bounds on the transformed coordinates.

\subsection{Mass-dependent conditioning of quasi-Newton steps}

The occurrence masses and local probabilities vary substantially in
magnitude. The source of the scale mismatch can already be seen in a
local term $m\,g(p)$ with softmax probabilities $p$, holding its incoming
mass $m$ fixed: its logit gradient is
\[
\frac{\partial (m\,g(p))}{\partial z_i}
=m p_i\left(\frac{\partial g}{\partial p_i}
       -\sum_jp_j\frac{\partial g}{\partial p_j}\right).
\]
Small masses, and often small probabilities, suppress these coordinate gradients.
This motivates a diagonal rescaling using mass and probability weights,
also suggested by the corresponding Fisher-information scales.  At the beginning of an outer pass,
let $x^{(0)}$ be the current parameter vector.  Optimize in coordinates
$y$ defined by
\begin{equation}
 x=x^{(0)}+S y,\qquad
 S_{ii}=\operatorname{clip}_{[0.03,10^6]}
       \left(\max\{w_i,10^{-14}\}^{-1/2}\right).
 \label{eq:coordinate-scaling}
\end{equation}
The weight $w_i$ is the current occurrence mass times the appropriate
local probability: orbit mass for a free complete split distribution,
region weight for region logits, and region weight times split
probability for split logits.  Root logits use their root probabilities.
For a dual marginal coordinate of marginal probability $p$, the weight
is the occurrence and region mass times $p(1-p)$.  A leaf-$\mu$ logit
uses its occurrence mass.  This is a diagonal conditioning rule, not
an exact natural-gradient method.  The scale is held fixed throughout
each inner solve and recomputed between outer passes.

Each inner solve uses L-BFGS history length $40$, function tolerance
$10^{-15}$, gradient tolerance $10^{-9}$, and a line-search limit of
$35$.  Relative to the preceding continuation, the scale cap is
increased from $10^5$ to $10^6$, the mass floor is decreased from
$10^{-12}$ to $10^{-14}$, and the history length is increased from
$30$ to $40$.  These changes concern numerical conditioning only;
they do not alter the admissible tensor family.

\subsection{Balancing extraction bottlenecks with six controls}

The bottleneck gap in \eqref{eq:bottleneck-gap} motivates a correction
that acts directly on the six role differences, even though the full
parameter vector has $212\,820$ coordinates. After every second outer
pass, this separate correction attempts to reduce
the six role-equality residuals using only six scalar controls.  Two
controls add to the root logits the features
\[
 \left(\frac{s_X-s_Y}{16},\frac{s_Y-s_Z}{16}\right).
\]
At each level $\ell\in\{4,3\}$, two controls add to the region logits
at shape $s$ and permutation $p_r$ the features
\[
 \left(\frac{s_{p_r(Y)}-s_{p_r(X)}}{2^\ell},
       \frac{s_{p_r(Z)}-s_{p_r(X)}}{2^\ell}\right).
\]
A $6\times6$ Jacobian is estimated by central differences with step
$10^{-4}$.  Up to four Newton corrections are computed by least
squares with relative singular-value cutoff $10^{-12}$, scaling a
correction down if its infinity norm exceeds $1$.  The balancing
attempt terminates early when $\|h\|_\infty<10^{-11}$.

Both unbalanced iterates and balanced candidates are evaluated using
the minimum-based objective \eqref{eq:exponent-objective}, and the best
candidate is retained.  Balancing is a candidate-generation step:
the augmented-Lagrangian continuation proceeds from its own iterate.
In floating-point monitoring, a minimum of three entries is evaluated
with the conservative approximation
\[
 \operatorname{smin}_\varepsilon(a)
       =-\varepsilon\ln\sum_{k=1}^3 e^{-a_k/\varepsilon},
 \qquad \varepsilon=10^{-14},
\]
which lies between $\min_k a_k-\varepsilon\ln3$ and $\min_k a_k$.
The final rational evaluation uses the actual minima.

The selected point comes from the balancing step following the last
outer pass.  Its floating-point minimum-based objective is approximately
$2.3711592385931133$.  The search is local and supplies no certificate
of global optimality.  The reported settings describe the continuation
that produced the witness; reproduction of the mathematical bound
requires the final witness and its checker, independently of repeating
the optimization trajectory.

\subsection{Positive rational export and independent reconstruction}

The selected probabilities are converted to integer numerators with
denominator $2^{80}$, preserving normalization and the paired symmetry
of local split probabilities.  Orbit masses are then expanded into
equal probabilities for their ordinary strings.  Increasing the common
denominator to $2^{96}$ makes this expansion exact, since the relevant
orbit sizes are powers of two.  Consequently the delivered witness
contains rational distributions on explicitly ordered strings rather
than an orbit-compressed numerical model.

The logarithmic dual parameters are converted separately.  In each
factor row, subtract the smallest active logarithmic value before
exponentiation and rational approximation.  The resulting active
factors are at least $1$, and their integer numerators are not restricted
to machine-word size.  This choice preserves large ratios without forcing small factors onto
a fixed positive floor. In particular, normalizing by the largest
factor and clipping small rounded values could substantially weaken
the entropy bound despite preserving positivity. The chosen row
normalization avoids that loss. Such row rescalings leave
\eqref{eq:dual-upper-bound} unchanged.  More generally, the validity
of the bound requires only strict positivity of the exported factors;
neither a small approximation error nor a stationarity check is an
assumption of the verification.

The checker validates support, nonnegativity, exact normalization,
child shapes, complementary references, and the recursively reconstructed
complete split distributions.  It evaluates the six global copies in
physical coordinates before taking the published minima.  Binary
logarithms and subsequent expressions are enclosed by rational outward
intervals.  The numerical result is reported in Section~\ref{sec:verification}.

\section{Rigorous verification}
\label{sec:verification}

Let $E_{\mathrm{pub}}$ denote the retained exponent computed with exact
maximum-entropy penalties, and let $\widehat E$ denote the value in
\eqref{eq:retained-exponent}, using the positive-factor bounds
\eqref{eq:dual-upper-bound}. Since these bounds only increase the
penalties, and all mass weights are nonnegative,
\begin{equation}
\widehat E\leq E_{\mathrm{pub}}.
\label{eq:monotonicity}
\end{equation}
Every sum and minimum occurring after the replacement is
coordinatewise nondecreasing, which preserves this inequality.

\begin{proposition}[Certificate implication]
\label{prop:certificate}
For admissible parameters, suppose the six-copy construction gives
the conservative retained exponent $\widehat E$, common matrix-size
exponent $M>0$, and rank exponent $C$. Then
\[
\omega\leq\frac{C-\widehat E}{M}.
\]
Consequently, a rational target $\tau$ satisfying
$\widehat E+\tau M-C>0$ proves $\omega<\tau$.
\end{proposition}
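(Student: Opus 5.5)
The proof splits into two parts. First, transfer the published bound from $E_{\mathrm{pub}}$ to the conservative $\widehat E$ via \eqref{eq:monotonicity}. Second, run the elementary algebra that turns a positive margin into a strict inequality on $\omega$.

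For the first part, we invoke the combination-loss theorem of \cite{alman2025} in the formulation of \cite[Section~2]{dupont2026}. For any admissible parameter assignment, tensor powers of $\mathrm{CW}_5^{\otimes 8}$ degenerate into a direct sum of matrix multiplication tensors. After the six global coordinate permutations balance the dimensions, these tensors have normalized log-multiplicity at least $E_{\mathrm{pub}}-o(1)$ and common normalized log-dimension $M-o(1)$, while the rank cost is $C$ as in \eqref{eq:rank-cost}. The asymptotic sum inequality \cite{schonhage1981} then gives $\omega\le (C-E_{\mathrm{pub}})/M$, where positivity of $M$ lets us divide and take the limit. Since $\widehat E\le E_{\mathrm{pub}}$ by \eqref{eq:monotonicity} and $M>0$, we get $(C-E_{\mathrm{pub}})/M\le (C-\widehat E)/M$, which is the first claim.

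One point needs a careful check. The value $\widehat E$ used here must be the one given by the six-copy reduction \eqref{eq:retained-exponent}. That reduction relies on the orientation argument of Section~\ref{sec:construction}: each physical orientation receives exactly one copy per local region, so the physical totals are $R_{\ell,k}/6$. We must also confirm that replacing each max-entropy penalty by $\widehat P_D$ preserves admissibility, which it does because the parameters themselves are unchanged and only a lower bound on the retained count is used. I expect this step to be the main obstacle. The difficulty is not computational; it is being sure that the published theorem's exponent is monotone in each penalty. My plan is to argue coordinatewise, as the paper does after \eqref{eq:monotonicity}: each penalty enters with a nonnegative mass coefficient into sums, and then into minima, and both operations are nondecreasing.

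For the consequence, suppose $\widehat E+\tau M-C>0$. Dividing by $M>0$ gives $\tau>(C-\widehat E)/M\ge\omega$, so $\omega<\tau$ strictly.
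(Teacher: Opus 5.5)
Your proposal is correct and follows the paper's own proof: the cited combination-loss theorem together with the asymptotic sum inequality gives $E_{\mathrm{pub}}+M\omega\le C$, the monotonicity $\widehat E\le E_{\mathrm{pub}}$ from \eqref{eq:monotonicity} transfers this bound to $\widehat E$, and dividing by $M>0$ and rearranging the strict margin yields $\omega<\tau$. Your added remarks on the six-copy reduction and on the coordinatewise monotonicity of the sums and minima spell out what the paper states around \eqref{eq:monotonicity}.
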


\begin{proof}
The combination-loss theorem~\cite{alman2025}, in the formulation
of~\cite[Theorem~1]{dupont2026}, together with the asymptotic sum
inequality~\cite{schonhage1981}, gives
$E_{\mathrm{pub}}+M\omega\leq C$.
Apply \eqref{eq:monotonicity}, divide by $M>0$, and rearrange the
strict feasibility inequality.
\end{proof}

\subsection{Exact checks and interval evaluation}

The witness is a JSON file containing integer numerators for the
probabilities and entropy-dual factors, together with node shapes,
levels, regional data, and child references. Both declared denominators
are $2^{96}$. The checker reconstructs the distributions on ordinary
ordered ternary strings, without importing the optimizer or its orbit
representation. It checks the integer types, support, nonnegativity,
normalization, strictly positive dual factors, child levels and shapes,
complementary split indices, and leaf-parameter ranges. The directed
graph represents the recursively unfolded parameter tree; the validity
of mass accumulation under this sharing is proved in
Lemma~\ref{lem:audit-unfolding}.

All probability and mass arithmetic is exact rational arithmetic.
Every logarithm is enclosed using a positive series with an explicit
remainder bound and outward rounding on a $2^{-192}$ grid, as detailed
in Lemma~\ref{lem:audit-logarithms}. Subsequent expressions are
evaluated by rational interval operations. In particular, for
$x_i\in[a_i,b_i]$, the checker uses
\[
\min_i x_i\in[\min_i a_i,\min_i b_i].
\]
It explicitly sums the physical-coordinate contributions of all
six global copies and six local regions, and then applies the original
regional minima. This provides a second implementation of the
symmetry reduction used during optimization.

Table~\ref{tab:certificate} gives outward decimal enclosures for the
final calculation. Each printed endpoint denotes an exact terminating
rational number; the accompanying result file contains the full
rational endpoints. The interval for $\widehat\Omega$ encloses the
value of $(C-\widehat E)/M$ for this particular witness. It is not a
two-sided estimate of $\omega$ and does not establish global optimality.

\begin{table}[ht]
\centering
\small
\begin{tabular}{@{}lll@{}}
\toprule
Quantity & Lower endpoint & Upper endpoint\\
\midrule
$\widehat E$ & $8.1061402482833692887887$ & $8.1061402482833692887888$\\
$M$ & $6.0530304732690271387406$ & $6.0530304732690271387407$\\
$C$ & $22.4588393764608328595357$ & $22.4588393764608328595358$\\
$\widehat\Omega$ & $2.3711592385931075066123$ & $2.3711592385931075066124$\\
\bottomrule
\end{tabular}
\caption{Outward enclosures obtained from the rational witness.}
\label{tab:certificate}
\end{table}

At the exact rational target $\tau=2.37115924$, the checker obtains
\begin{equation}
\widehat E+\tau M-C\in
[8.5159631350886\cdot10^{-9},\;8.5159631350887\cdot10^{-9}].
\label{eq:checked-margin}
\end{equation}
The positive lower endpoint supplies the strict inequality, with
$M>0$ certified separately.

\begin{proof}[Proof of Theorem~\ref{thm:main}]
The exact checks establish admissibility of the supplied rational
witness. Apply Proposition~\ref{prop:certificate} using the positive
lower endpoint in \eqref{eq:checked-margin}. The strict gap permits
the choice of $\varepsilon$ in Section~\ref{sec:algorithm}, proving
the $O(n^{2.37115924})$ arithmetic bound. A matrix multiplication
algorithm over $\mathbb Q$ remains valid after extending scalars, giving
the asserted consequence for characteristic-zero extensions.
\end{proof}

\subsection{Reproduction and scope}

The ancillary directory contains the witness in \texttt{witness.json}
and the checker in \texttt{verify.py} and \texttt{rigorous\_interval.py}.
The recorded result is \texttt{expected\_result.json}; documentation is
included alongside these files. With Python~3.10
or later, the verification command is
\begin{verbatim}
python3 verify.py witness.json --target 2.37115924 \
    --out result.json
\end{verbatim}
The checker uses only the standard library. It requires no package
installation, model, remote service, credentials, or network access.
The exact witness bytes have SHA-256 digest
\begin{center}
\small\ttfamily
87f5e7a8c722ca05464920c99c64083e5fa\\
8235d1822720278e5a058485532ff
\end{center}
where the two lines are concatenated. The package also includes
file-level checksums. A regenerated result can be compared with
the recorded rational endpoints; elapsed time and diagnostic output
are not part of the mathematical certificate.

This is a computer-assisted proof relying on the cited tensor theorem
and the supplied finite arithmetic calculation. The verifier is
auditable source code, rather than a proof-assistant formalization of
the full tensor theorem. Verification reproduces the bound from the
final witness. It does not replay the numerical optimization trajectory,
whose last continuation inherits an earlier point and multipliers.
The optimization libraries used to discover the point are therefore
unnecessary for verifying it.

\section{Discussion}

The certified construction improves the asymptotic arithmetic
complexity of matrix multiplication within the established
combination-loss framework. Its algorithmic content is the rational
recursive extraction plan, which fixes the allocation of tensor
constituents and the tradeoff between retained multiplicity and matrix
size. The exact certificate turns that plan into a theorem about
matrix multiplication.

The computational methodology addresses three coupled difficulties:
weak gradients from rare recursive occurrences, a retained exponent
controlled by competing minima, and entropy penalties that must survive
rational export. Mass-dependent conditioning, a low-dimensional
bottleneck correction, and an independent exact reconstruction address
these difficulties at different stages. The experiment establishes the
success of the combined procedure; it does not isolate the contribution
of each component through controlled ablations.

The resulting $212\,820$-coordinate construction also shows that useful
structural restrictions can coexist with an improved exponent.
Selective removal of parameter identifications and alternative
bottleneck corrections are natural directions for further algorithmic
constructions. The certificate provides a fixed, independently
checkable reference point for evaluating such extensions.

\paragraph{Acknowledgments and computational assistance.}
The initial exploratory implementation was informed by publicly
circulated computational notes shared by Eric Naslund in 2026.
The numerical implementation was reconstructed and extended, and the
resulting witness was evaluated with the independently implemented
ordinary-string verifier accompanying this paper.
Generative AI tools assisted the numerical exploration, software
development, and preparation of the manuscript. The numerical certificate
and verification software are supplied for independent reproduction.

\bibliographystyle{plainnat}
\bibliography{references}

\appendix
\section{Arithmetic and representation lemmas}

\subsection{Shared recursive parameters}

\begin{lemma}[Unfolding shared parameters]
\label{lem:audit-unfolding}
Suppose a finite layered directed acyclic graph specifies the parameters
of the combination-loss construction. Assume that every child edge has
the prescribed lower level and shape, and that all distributions and
leaf parameters satisfy their exact support, normalization, and range
conditions. The graph defines an admissible parameter tree by giving
each occurrence of a referenced node a separate copy of its parameters
and descendants. Its derived distributions and exponent contributions
can be evaluated on the graph, accumulating the masses of repeated
occurrences before computing contributions linear in those masses.
\end{lemma}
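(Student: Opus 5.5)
The plan is to build the unfolded tree explicitly, then show that evaluating each quantity on the tree agrees with evaluating it on the graph with accumulated masses.

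\textbf{Step 1: the unfolded tree.} Let $G$ be the layered DAG. Define its unfolding $T$ by induction on level, from the root downward. The nodes of $T$ are the finite paths in $G$ that start at a root occurrence and follow child references $c(v,r,u)$. Each path $\pi$ ending at a node $v$ inherits a copy of $v$'s parameters: $A_{v,r}$, $\alpha_{v,r}$, the leaf parameter $\mu_v$, or the free split distribution. Its children are the extended paths $\pi\cdot(r,u)$.

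Since $G$ is finite and layered, every path has length at most four, so $T$ is finite. By construction, distinct occurrences are distinct nodes of $T$ and carry separate tensor variables; reuse of a parameter does not identify tensor variables between distinct occurrences, as the paper notes. Admissibility of $T$ is local: for each node one checks shapes and levels of its children, support, normalization, and ranges, including \eqref{eq:complement-symmetry}. Each of these conditions depends only on the copied parameters of the endpoint $v$ and on the shapes of its children in $G$, so the hypotheses transfer verbatim.

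\textbf{Step 2: derived distributions.} Show by downward induction on level that the complete split distributions $\beta_\pi^w$ computed by \eqref{eq:beta-recursion} on $T$ depend only on the endpoint $v$ of $\pi$. At level~$2$ and for zero-coordinate shapes this holds by definition. Inductively, \eqref{eq:beta-recursion} at $\pi$ uses only $v$'s parameters and the children's $\beta$, which are endpoint-determined by the induction hypothesis. Hence $\beta$ is well defined on $G$. The local quantities $c_{v,r,k}$ in \eqref{eq:local-counts}, and the leaf quantities in \eqref{eq:positive-leaf} and \eqref{eq:zero-leaf} divided by $m$, are also functions of the endpoint only.

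\textbf{Step 3: masses.} In $T$, the mass of a path is the product of the root mass with the factors $A_{v,r}(\alpha_{v,r}(u)+\alpha_{v,r}(s_v-u))$ along its edges. Summing over all paths ending at a fixed graph node $v$ gives a quantity $m_v^G$. Show by induction on level that $m_v^G$ equals the result of the accumulating recursion \eqref{eq:mass-recursion} on $G$. Paths ending at $v$ are in bijection with pairs consisting of a path ending at a parent $v'$ and an edge $(v',r,u)$ with $c(v',r,u)=v$. Distributivity then gives $m_v^G=\sum m_{v'}^G\,A_{v',r}(\cdots)$, which is exactly the accumulation rule.

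\textbf{Step 4: exponent contributions.} The stage totals \eqref{eq:stage-role-totals}, $E_2$, and $M$ are sums over $T$ of the mass times an endpoint-determined quantity. Grouping the terms by endpoint yields $\sum_{v\in G}m_v^G\cdot(\text{quantity at }v)$. The root totals \eqref{eq:root-counts} involve only root parameters and the endpoint-determined $\beta$. The six-copy symmetrization acts on the whole tree by coordinate permutation, so it commutes with the unfolding.

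\textbf{Main obstacle.} The delicate point is Step~2 together with the grouping in Step~4. One must confirm that no quantity is nonlinear in the masses of distinct occurrences before the stage sum. In particular, the $\eta$ terms inside the $c_{v,r,k}$ and the root totals mix children only within a single occurrence, through the weights $a_u$, and never across occurrences. The minima are taken only after summing the stage totals. I would check each formula \eqref{eq:local-counts}–\eqref{eq:zero-leaf} against this criterion. That check, plus the path bijection, completes the argument.
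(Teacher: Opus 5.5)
Your proposal is correct and follows essentially the same route as the paper: unfold the DAG into a finite tree (finite because levels strictly decrease), show by induction from the leaves that all copies of a graph node have identical complete split distributions and therefore identical local coefficients, and then use the linearity of the stage totals, $E_2$, and $M$ in the incoming mass to group copies by graph node. Your explicit path bijection for the accumulated masses spells out what the paper compresses into ``linear in the incoming mass,'' and your Step~2 induction runs from the leaves upward, although you call it ``downward.''
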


\begin{proof}
Unfold the graph recursively. Its levels strictly decrease, so the
resulting tree is finite. An induction from the leaves proves that all
copies of a graph node have identical complete split distributions:
the recursion uses only its parameters and its children's distributions.
Its local entropies and matrix-size coefficients are therefore also
identical. Mass propagation, regional sums before minimization, and leaf
contributions are linear in the incoming mass. Summing the masses of
these copies consequently gives the same result as their separate
evaluation. No tensor variables are identified by this parameter sharing.
\end{proof}

\subsection{Rational logarithm enclosures}
\label{sec:audit-logarithms}

\begin{lemma}[Outward dyadic logarithms]
\label{lem:audit-logarithms}
The logarithm routine in the accompanying verifier returns rational
numbers $L_-(x)\leq\log_2x\leq L_+(x)$ for every positive rational
input $x$. The enclosure uses only integer arithmetic.
\end{lemma}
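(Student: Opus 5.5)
To prove Lemma~\ref{lem:audit-logarithms}, I will reduce $\log_2 x$ to an integer exponent plus the natural logarithm of a number close to $1$. The latter is enclosed by a positive series with an explicit remainder bound, and every quotient is rounded outward on the $2^{-192}$ grid. Write $x=p/q$ with positive integers $p,q$. Using bit lengths, choose an integer $e$ with $y=x\cdot2^{-e}\in[1,2)$; concretely $e=\lfloor\log_2 p\rfloor-\lfloor\log_2 q\rfloor$, corrected by at most one step after an exact integer comparison of $p$ with $q2^{e}$. Then $\log_2x=e+\ln y/\ln 2$. The value $e$ is exact and obtained by integer operations only, so it remains to enclose $\ln y$ and $\ln 2$.

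For $\ln y$ with $y\in[1,2)$, I will use the series
\[
\ln y=2\sum_{k\geq0}\frac{t^{2k+1}}{2k+1},\qquad t=\frac{y-1}{y+1}\in[0,1/3).
\]
Here $t$ is an exact rational, and all terms are nonnegative. Any partial sum $S_N$ is therefore a lower bound. The tail is at most
\[
\frac{2t^{2N+3}}{(2N+3)(1-t^2)}\leq\frac{9}{4}\cdot\frac{2t^{2N+3}}{2N+3},
\]
so $S_N$ plus this bound is an upper bound. The same formula with $y=2$, $t=1/3$, gives an enclosure $[\ell_-,\ell_+]$ of $\ln2$ with $\ell_->0$. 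To keep the numbers within integer arithmetic on the grid, each term is replaced by its floor (for the lower sum) or ceiling (for the upper sum) on the grid $2^{-192}$, computed as $\lfloor a2^{192}/b\rfloor$ or the corresponding ceiling. The remainder bound is rounded upward in the same way, and $N$ is chosen so that the remainder is below the grid size. The routine then returns
\[
L_-(x)=e+\frac{\lambda_-}{\ell_+},\qquad L_+(x)=e+\frac{\lambda_+}{\ell_-},
\]
with both quotients rounded outward to the grid; here $[\lambda_-,\lambda_+]\ni\ln y$ and $\lambda_-\geq0$. Since the quotient is monotone in both arguments when the numerator is nonnegative and the denominator positive, this gives $L_-(x)\le\log_2 x\le L_+(x)$.

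The main point to check carefully is that every rounding goes in the correct direction. Sums of floors bound the exact sum from below. Ceilings together with the rounded-up tail bound it from above. The final division must respect signs: this is why I normalize to $y\geq1$ and put all sign information into the exact integer $e$. With that arrangement, all inequalities are elementary.
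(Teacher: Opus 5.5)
Your proposal is correct and follows the paper's proof essentially step for step. Both use the reduction to $y\in[1,2)$ by exact integer comparisons and the positive series in $t=(y-1)/(y+1)\in[0,1/3)$ with tail bound $2t^{2m+1}/((2m+1)(1-t^2))$, the same enclosure of $\ln 2$ at $t=1/3$ with positive lower endpoint, and outward division relying on $\lambda_-\ge 0$. The only differences are implementational: the paper brackets $t$ and $t^2$ on the $2^{-192}$ grid, propagates odd powers by floor/ceiling integer recurrences with fixed $m=74$, and checks $V^+<S$ for the tail denominator, whereas you round exact rational terms directly, which is equally valid (your factor $9/4$ is looser than the sharp $9/8$ but still correct).
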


\begin{proof}
Write $x=2^kz$ with integer $k$ and $1\leq z<2$, using exact comparisons
of integers. Let $S=2^{192}$, $m=74$, and
$t=(z-1)/(z+1)\in[0,1/3)$. The identity
\[
 \ln z=2\sum_{j=0}^{m-1}\frac{t^{2j+1}}{2j+1}+\mathcal R_m,
 \qquad
 0\leq\mathcal R_m\leq
 \frac{2t^{2m+1}}{(2m+1)(1-t^2)}
\]
follows by integrating the geometric series, and bounding every
denominator in its remaining positive terms by $2m+1$.

The implementation brackets $t$ and $t^2$ on the grid $S^{-1}\mathbb Z$.
If $P_j^-/S$ and $P_j^+/S$ bracket $t^{2j+1}$, and $V^-/S,V^+/S$
bracket $t^2$, the integer updates
\[
 P_{j+1}^-=\left\lfloor\frac{P_j^-V^-}{S}\right\rfloor,
 \qquad
 P_{j+1}^+=\left\lceil\frac{P_j^+V^+}{S}\right\rceil
\]
preserve this invariant. Each summand is divided by $2j+1$ with
outward rounding. After $m$ summands, the upper tail in units of
$S^{-1}$ is bounded by
\[
 \left\lceil
 \frac{2P_m^+S}{(2m+1)(S-V^+)}
 \right\rceil.
\]
The denominator is positive since $t\leq1/3$ and the grid bounds give
$V^+<S$. This produces integers $A_-,A_+$ enclosing $S\ln z$.
The same calculation at $z=2$ produces integers $B_-,B_+$ enclosing
$S\ln2$, with $B_->0$. Therefore
\[
 k+\frac{1}{S}\left\lfloor\frac{SA_-}{B_+}\right\rfloor
 \;\leq\;\log_2x\;\leq\;
 k+\frac{1}{S}\left\lceil\frac{SA_+}{B_-}\right\rceil.
\]
If $z=1$, the routine returns the exact integer $k$.
\end{proof}

All subsequent interval operations are exact rational operations with
outward endpoints. In particular, if $x_i\in[a_i,b_i]$, then
\[
 \min_i x_i\in[\min_i a_i,\min_i b_i].
\]
For a nonnegative rational vector $v$ of mass $s$, define
$\mathcal H(v)=-\sum_i v_i\log_2v_i$, with $0\log_20=0$.
The compatibility calculation uses
$\mathcal H(v)-\mathcal H((s))=sH(v/s)$ when $s>0$, and zero when
$s=0$. This identity also covers mixtures whose mass exceeds one.

\end{document}